\newtheorem{theorem}{Theorem}
\newtheorem{proposition}{Proposition}
\newtheorem{definition}{Definition}%
\newtheorem{lemma}{Lemma}
\newtheorem{corollary}[theorem]{Corollary}
\newcommand{\HH}{\mathcal{H}}
\newcommand{\PP}{\mathbb{P}}
\newcommand{\paren}[1]{\left( #1 \right)}
\begin{document}

\title[Article Title]{Hadwiger Models: Low-Temperature Behavior in a Natural Extension of the Ising Model}


\author*[1]{\fnm{Summer} \sur{Eldridge}}\email{seldridge@gradcenter.cuny.edu}

\author[2]{\fnm{Benjamin} \sur{Schweinhart}}\email{bschwei@gmu.edu}

\affil[1]{\orgdiv{Mathematics}, \orgname{Graduate Center, City University of New York}, \orgaddress{\street{365 5th Ave}, \city{New York}, \postcode{10016}, \state{New York}, \country{United States}}}
\affil[2]{\orgdiv{Mathematics}, \orgname{George Mason University}, \orgaddress{\street{4400 University Dr}, \city{Fairfax}, \postcode{22030}, \state{Virginia}, \country{United States}}}


\abstract{In two dimensions, all isometrically invariant Markov random fields on binary assignments are induced by energy functions that can be represented as linear combinations of area, perimeter, and Euler characteristic. This class of model includes the Ising model, both ferro- and antiferromagnetic, with and without a field, as well as the Baxter-Wu model. On the hexagonal lattice, we determine the low-temperature behavior for this class of model, and construct a phase diagram of said behavior. In particular, we identify regions with three geometric phases, regions with a single unique phase, and coexistence curves between them. We also characterize the behavior along two non-Peierls lines, where entropy fails to vanish the as temperature goes to zero.}

\keywords{Lattice statistical mechanics, Spin models, Hadwiger's theorem, Ising model}



\maketitle

\section{Introduction}

Traditionally, the Ising Hamiltonian is defined as a function $H:\{-1,1\}^{V}\to\mathbb{R}$ by $H(\sigma)=\sum_{(x,y)\in E}\sigma(x)\sigma(y)+h\sum_{x\in V}\sigma(x)$, where $V$ and $E$ are the vertex and edge sets of some graph $G$. However, existence of a phase transition for the planar Ising model was first proven by Peierls \cite{P36} using an equivalent formulation where spins are assigned to the faces $F$ of the dual graph $G^*$ of $G,$ and $H$ is defined in terms of the area and perimeter of the union of dual faces assigned spin $1$. Formally, given a spin assignment $\sigma\in \{-1,1\}^{F}$ denote by $\sigma^*$, represented in Fig.~\ref{fig:conversion}, the collection of dual faces assigned spin $1$ and set $H(\sigma^*)=P(\sigma^*)+hA(\sigma^*)$, where $P(\sigma^*)$ and $A(\sigma^*)$ are respectively the area and perimeter of $\sigma.$
This formulation allowed Peierls to bound the perimeter weights with a geometric argument and thus, at $h=0$, guarantee that one orientation dominates at sufficiently low temperatures.

\begin{figure}
    \centering
    \includegraphics[width=0.7\linewidth]{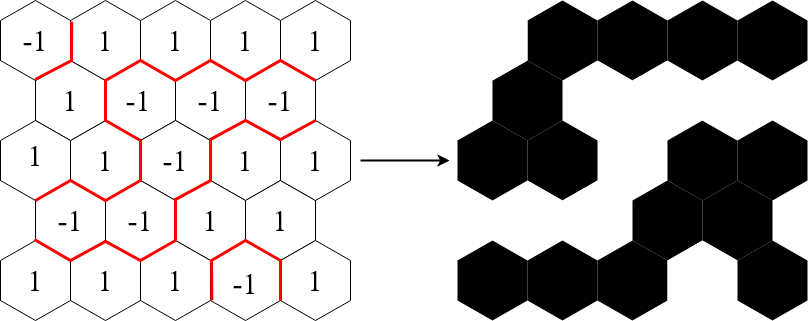}
    
    \caption{Converting binary assignments on a lattice to polyconvex subsets on the faces of the dual lattice. $\sigma^*$ is illustrated by the union of the black hexagons on the right.}
    \label{fig:conversion}
    
\end{figure}

A \emph{valuation} is a function $H$ assigning real numbers to polyconvex subsets of $\mathbb{R}^d$ (where a polyconvex set is one that is a finite union of bounded convex sets) such that $H(\varnothing)=0$ and
$H(U\cup V)=H(U)+H(V)-H(U\cap V)$. Functions satisfying the latter equation have the \emph{Markov property}. If $H$ also satisfies $H(T(U))=H(U)$ for all isometries $T$, then $H$ is called an \emph{invariant valuation}. The Markov property implies that the Gibbs distribution induced by $H$ within a given region depends only on its boundary; if the configuration space assigns positive probability to all possible assignments from the set to the lattice, such a distribution is called a Markov random field. For an infinite planar graph with finite fundamental domain, the dual faces are always compact and polygonal, meaning their union is polyconvex, so any isometrically invariant Markov Hamiltonian on binary assignments is an invariant valuation, up to a constant term for the energy of the empty set. Hadwiger's theorem describes the set of all such functions explicitly:
\begin{theorem}[Hadwiger's Theorem\cite{H56}]
In two dimensions, every invariant valuation must be a linear combination of perimeter, area, and Euler characteristic.
\end{theorem} Recall that in two dimensions Euler characteristic $\chi$ can be expressed as $$\chi=\#\text{vertices}-\#\text{edges}+\#\text{faces}.$$

 All Markov random fields are induced by Markov Hamiltonians \cite{CH71,G73}, so the class of Gibbs distributions induced by invariant valuations coincides with the class of  Markov random fields on binary assignments. \footnote{The relation between Markov Hamiltonians and Markov random fields fails to hold without the positivity condition: for extensions of the statement, see \cite{CM16}.} We summarize the preceding discussion in a corollary.
 
\begin{corollary}
A Markov random field in two dimensions is the Gibbs distribution induced by a Hamiltonian that is a linear combination of perimeter, area, and Euler characteristic. 
\end{corollary} 
Note that each term can be expressed in terms of a product over a set of spins, so this class forms a subfamily within the generalized Ising models. Hadwiger's theorem and the previous discussion extend to $d$-dimensions, where perimeter, area, and Euler characteristic are replaced by a collection of $d+1$ distinguished functions called intrinsic volumes. See~\cite{G07} for a definition. We leave the study of Markov random fields in $d$-dimensions to future work.

This paper is dedicated to the study of the distributions induced by members of this class of Hamiltonian on the hexagonal lattice.  We choose to work on the hexagonal lattice because each term behaves nicely under a spin flip: if a set of faces is replaced with its complement, the perimeter remains unchanged, whereas the area and the Euler characteristic are inverted, up to an additive constant.

We introduce some notation. Define a domain as a subset $\Lambda$ of the faces of a lattice such that $\cup_{f\in \Lambda}f$ and $\mathbb{R}^2\setminus \cup_{f\in \Lambda}f$ are both connected, a configuration $\sigma\in \{-1,1\}^\Lambda$ on a domain as a binary assignment to the faces of $\Lambda$, a boundary configuration $\rho$ as a binary assignment to all the faces in the lattice, an occupied or filled hexagon $f$ as one such that $\sigma(f)=1$ if $f\in \Lambda$ or $\rho(f)=1$ if $f\notin\Lambda$ and, $\partial \Lambda$ (the ``boundary") as the set of faces neighboring $\Lambda$. We identify $\sigma$ and $\rho$ with their associated set of occupied hexagons.

\begin{definition}[Hadwiger Model]\label{def:hadwiger}
    Given a domain $\Lambda$, a boundary configuration $\rho$, and a temperature $T$, define a probability distribution on the set of configurations $\Omega=\{-1,1\}^\Lambda$ with parameters $x,p,a$ by 
    \begin{equation}
    \mu(\sigma)\coloneqq \mu_{x,p,a,T}(\sigma)\propto e^{\mathcal{H}(\sigma)/T},\; \mathcal{H}(\sigma)=x\chi(\sigma)+pP(\sigma)+aA(\sigma)
    \end{equation}
    where $\chi,P,A$ are Euler characteristic, perimeter, and area respectively. 
\end{definition}

While we define the Hadwiger models in terms of the perimeter, area, and Euler functions, the phase diagram is naturally parametrized by vertex energies. As the Hamiltonian is a sum of interactions between neighboring hexagons, we can rewrite it as a sum over the vertices of functions depending only on the states of neighboring hexagons.  Note that every model assigns the empty state energy 0, and that only number and not position of adjacent hexagons affect the energy, because the Hamiltonian is invariant under rotations. Refer to the vertex states with 0,1,2, and 3 incident filled hexagons as $E,C,H,F$ respectively. We denote the energy assigned by the Hamiltonian to vertex state $X$ with $e_X$. Assuming translation and rotation invariance, these energies are uniquely determined by the equation  $\mathcal{H}=\sum_{v\in V} e_{X\paren{v}}.$


Euler characteristic, perimeter, and area are signed sums over the number of adjacent faces, edges, or vertices, so for each element we sum over, we can divide by the number of vertices incident to that element to determine the energy of each vertex in a particular state. For example, the $H$ state vertex has two incident hexagons, three incident edges, and one incident vertex, so it contributes $1=\frac{2}{2}$ to the perimeter, $\frac{1}{3}=\frac{2}{6}$ to the area, and $\frac26-\frac32+1=-\frac16$ to the Euler characteristic. Applying this logic to every state, we determine the vertex energies are
\begin{equation}
\label{eq:vertexenergies}
e_C=\frac{x}{6}+p+\frac{a}{6},\quad e_H=-\frac{x}{6}+p+\frac{a}{3},\quad e_F=\frac{a}{2}\,.
\end{equation}
These assignments, represented in Fig.~\ref{fig:assignments}, span all choices of $e_C,e_H,e_F$, so any choice will be an invariant valuation. Combining this with the previous corollary yields the next result.

\begin{figure}[t]
\begin{minipage}{0.35\linewidth}
    \centering
    \includegraphics[width=\linewidth]{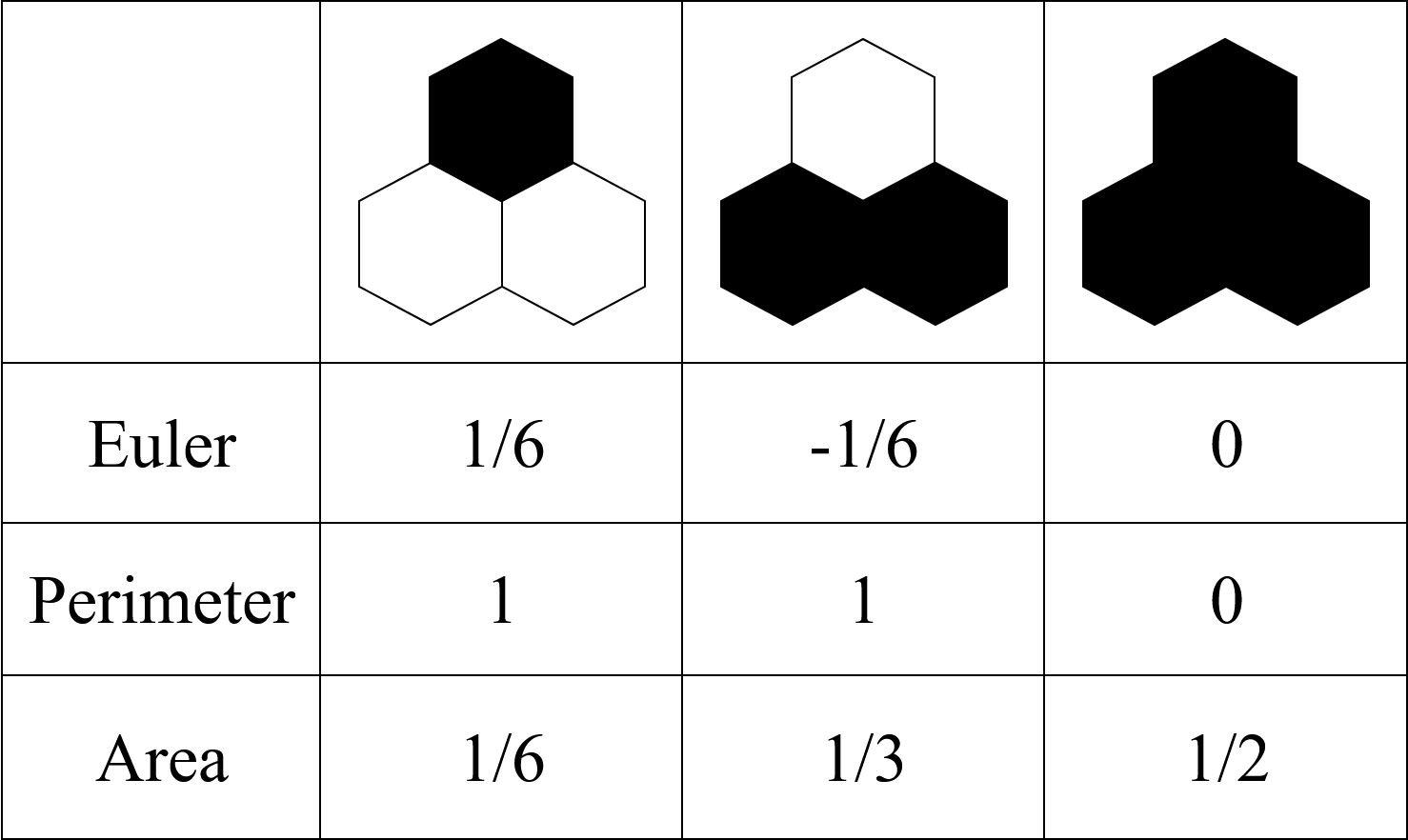}
    
    \caption{Energy assigned by each term to the three non-empty vertex states}
    \label{fig:assignments}
\end{minipage}
    \hspace{\fill}%
\begin{minipage}{0.6\linewidth}
    \centering
    \includegraphics[width=\linewidth]{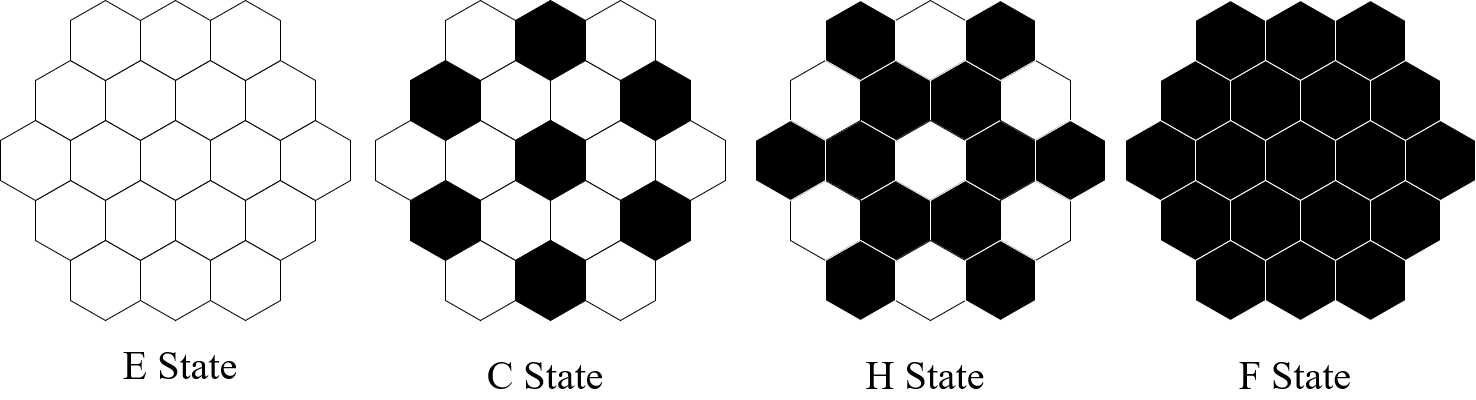}
    \caption{Representative subsets of each of the configurations with exactly 1 vertex state}
    \label{fig:groundstates}
\end{minipage}
\end{figure}

\begin{proposition}
The set of Markov Hamiltonians on the hexagonal lattice is \begin{equation}
\{g(\sigma)=\sum_{v\in V}f(\sigma_v)|f(\sigma_v)=e_C\cdot \delta_{C,\sigma_v}+e_H\cdot \delta_{H,\sigma_v}+e_F\cdot \delta_{F,\sigma_v}\}
\end{equation}
where $\sigma_v$ is the vertex state induced by $\sigma$ on $v$ and $\delta$ is the Kronecker delta function.
\end{proposition}
In particular, we can obtain the Hamiltonian with parameters $x,p,a$ by using the vertex energies given in equation~\ref{eq:vertexenergies}. On the other hand, we can convert back to $x, p, a$ parameters above with the equations
$$x=3e_C-3e_H+e_F,\quad p=\frac{e_C+e_H-e_F}2,\quad a=2e_F.$$

Similar reparametrizations exist for Markov Hamiltonians on other lattices and in higher dimensions, but they won't necessarily include all possible energy assignments to vertex states. 

Because they determine the ground configuration, the coordinates $e_C,e_H,e_F$ are the most natural parametrization at low temperatures. In particular, the ground configurations, depicted in Fig.~\ref{fig:groundstates}, are those which consist only of the lowest-energy vertex states for that model. The full space of Hamiltonians is 3-dimensional, but two Hamiltonians related by a scaling factor represent the same model at different temperatures, so the space of all ``models" is a sphere. If we unwrap this sphere to the plane, we obtain a two dimensional diagram depicted in Fig.~\ref{fig:0temp}.

\begin{figure}
\centering
\begin{minipage}{0.6\linewidth}
\includegraphics[width=\linewidth]{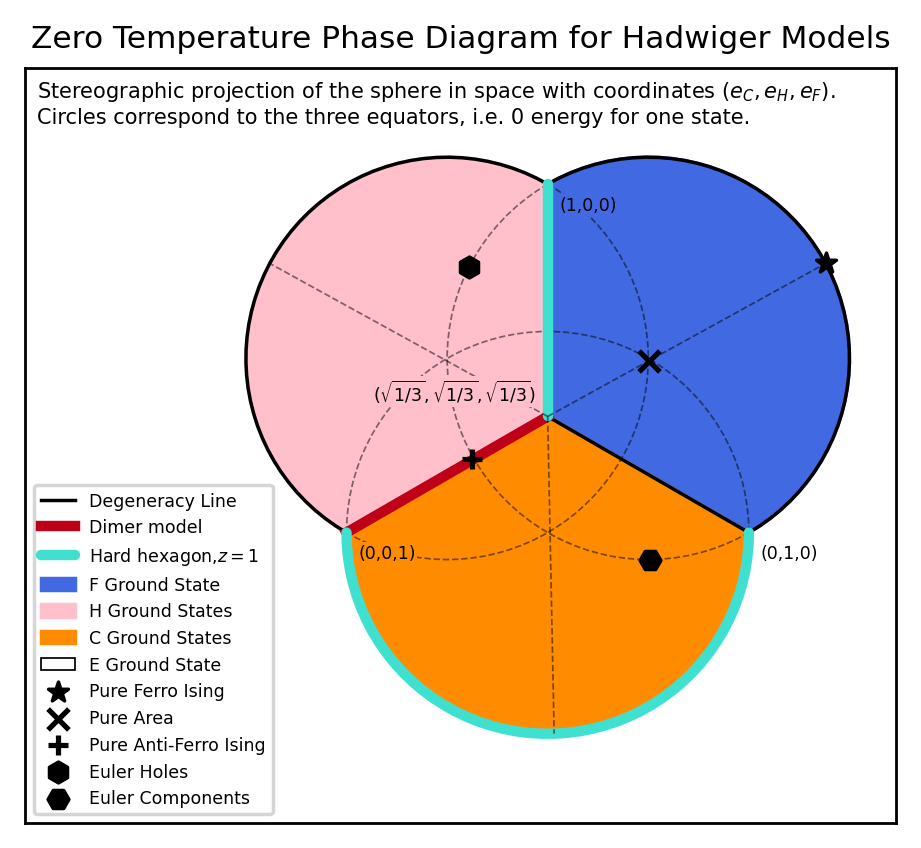}
\end{minipage}%
    \hspace{\fill}%
\begin{minipage}{0.4\linewidth}
\caption{At $T=0$ the behavior is determined only by states with minimal energy.  Along three of the degeneracy lines where multiple vertex states have minimal energy there is nonzero entropy, meaning infinitely many global states are minimal. When both $C$ and $H$ vertex states are minimal the allowed configurations are those of the $T=0$ antiferromagnetic Ising model, the same as those of the dimer model \cite{BH82}. When both $E$ and $C$ vertex states are minimal the allowed configurations are those of the hard hexagon model with the even distribution on allowed configurations \cite{B89}. The $F$ and $H$ line is symmetric to the $E$ and $C$ under a spin flip. ``Landmark points" are marked, including the pure Euler, Perimeter, and Area terms. The line containing ``Pure Ferro Ising," ``Pure Area", and ``Pure Anti-Ferro Ising" corresponds to the Ising model with external field.}
\label{fig:0temp}
\end{minipage}
\end{figure}

The diagram decomposes into regions where there is a unique vertex state with lowest energy, separated by degeneracy lines where two vertex states have equal and minimal energy. The $H$ region has three ground configurations, corresponding to the three sublattices of the hexagonal lattice, which maximize the number of holes. The $C$ region is the inverse of the $H$ region, having three ground configurations with maximally many components. The $F$ and $E$ regions have one ground configuration each, including all and none of the hexagons respectively.

\section{Main Results}

First, we will recall some terms so we can use them freely in the following sections. A Gibbs state of a model with Hamiltonian $\mathcal{H}$ is a mapping from bounded measurable functions $f$ on the set of configurations $\Omega$ to expectations $\langle f \rangle$ of those functions, such that for any finite region the expectations conditional on a particular boundary configuration $\rho$ are those generated by the Hamiltonian given $\rho$, as defined in Definition~\ref{def:hadwiger}. Note that this implicitly defines probability distributions on configurations for any finite regions via indicator functions. Such a Gibbs state is ``extremal" if it cannot be represented as a linear combination of other Gibbs states with positive coefficients. A configuration or distribution is ``translation-invariant" if it is invariant under all translations that preserve the lattice, while it is ``periodic" if it is invariant under a spanning sublattice. These terms are technically distinct, but any periodic model can be rendered translation-invariant by considering larger configuration spaces on tiles formed from the fundamental domain of the period lattice \cite{DS85}. A model satisfies the Peierls condition if there exists a finite collection of translation-invariant minimal-energy configurations and a constant $c$ such that, for any other configuration $\omega$ that differs from a ground configuration $\eta$ only on a finite set $\Gamma$,  $\HH(\omega)-\HH(\eta)\geq c|\Gamma|$.
\begin{definition}
A \textit{degeneracy line} refers to a curve such that multiple configurations, not related by a symmetry, have minimal energy, while a \textit{coexistence curve} is a curve on which there are multiple non-symmetric Gibbs states.
\end{definition}

With the space of models laid out, we now consider the low temperature behavior of the Hadwiger models. There is a chain of statements we must string together to restrict our Gibbs states. Dobrushin and Shlosman tell us that in two dimensions and at low temperatures, all Gibbs states of a translation-invariant (resp. periodic) Peierls model are translation-invariant (periodic)\cite{DS85}. Pirogov and Sinai tell us that given the Peierls condition, all translation invariant (or equivalently, periodic) Gibbs states are linear combinations of extremal ``pure phases" \cite{PS75}. Zahradnik, extending Pirogov--Sinai, tells us these ``pure phases" are dominated by a particular ground configuration of the model \cite{Z84}. Slawny and Bricmont tell us the position of the transition curves between areas such that different ground configurations have an extremal Gibbs state corresponding to them for a particular model \cite{BS89}\cite{S87}.

 \begin{definition}
 Given a ground configuration $G$ denote by $B(\omega)$ the set of vertices not agreeing with $G.$ We say that a Gibbs state is \textbf{``dominated"} by a certain global configuration $G$ --- or equivalently is a ``pure phase" --- if there exists  some constant $c$ such that the probability a vertex is in a cluster of vertices of perimeter $\Gamma$ not agreeing with $G$ is bounded above by $e^{-c\Gamma}$ \cite{Z84}. These pure phases are exactly those induced by choosing boundary conditions agreeing with that configuration, although they may be induced by other boundary conditions. To respect the symmetry of the model, if a set of $n$ configurations is symmetrically related, we instead mean that the model has a $1/n$ chance of being dominated by any of the $n$ pure Gibbs states, conditional on an even measure for all boundary conditions.
 \end{definition}
 
 For any point in the interior of a region such that a single vertex state has minimal energy (that is, the pink, orange, blue and white regions in Fig.~\ref{fig:0temp}), there will be a temperature low enough that the minimal configuration dominates. However, at any fixed low temperature, the lowest energy configuration may not dominate in the entire region. The asymptotic transition curves between areas where different phases dominate, along which more than one distinct asymmetric ground configuration dominates, can be calculated based on the technique of Slawny \cite{S87}\cite{BS89}. Using these techniques, we derive a phase diagram representing the low-temperature behavior of the Hadwiger models, depicted in Fig.~\ref{fig:lowtemp} and summarized in the following theorems:

 \begin{theorem}\label{thm:H&C}
At any point such that the unique lowest energy vertex configuration is $H$ or $C$, there exists $T_0$ such that for $T<T_0$ there are exactly three distinct extremal Gibbs states, dominated respectively by the three ground configurations, and every Gibbs state is a linear combination of these.
\end{theorem}

To identify the position of the coexistence curves relative to the degeneracy lines, we determine which state dominates along the degeneracy lines, which in turn tells us where the line and curve cross.

\begin{theorem}\label{thm:E-H}
For a fixed closed subset of the $E-H$ degeneracy line not containing its endpoints there exists $T_0$ such that, $\forall 0<T<T_0$, if $e_F>e_C$ the $E$ state dominates and if $e_F<e_C$ the $H$ state dominates. 
\end{theorem}
For example, along the closed subset $\min(E_C,E_F)\geq \epsilon$ of the degeneracy line $e_H=e_E=0$, there is a fixed $T_0$ such that the $E$ state dominates when $E_C<E_F$ and the $H$ state dominates when $E_C>E_F$. 
Moreover, the coexistence curve between the $E$ and $H$ states crosses this degeneracy line at the point $e_F=e_C=\frac{\sqrt{2}}{2},e_H=e_E=0$,  passes inside the region where $H$ is the ground state when $e_F>e_C,$ and is contained inside the region where $E$ is the ground state when $e_F<e_C.$ 

The corresponding statement under a spin flip also holds for the $C-F$ degeneracy line.

\begin{theorem}\label{thm:E-F}
For a fixed closed subset of the $E-F$ degeneracy line not containing its endpoints there exists $T_0$ such that, $\forall 0<T<T_0$,  if $e_H>e_C$ the $E$ state dominates and if $e_H<e_C$ the $F$ state dominates. 
\end{theorem}

For the non-Peierls $E-C$ and $E-H$ degeneracy lines, we can't use Pirogov-Sinai, so the following theorem is proven using disagreement percolation and translation invariance of the unique Gibbs state. 
\begin{theorem}\label{thm:E-Cb}
Along the $E-C$ degeneracy line, if $e_F\geq e_H$ then there is a unique Gibbs state, for which no configuration dominates at any temperature. The same is true along the $H-F$ degeneracy line when $e_E\geq e_C$.
\end{theorem}
We also obtain a weaker analogue of domination as an application of reflection positivity, which all Hadwiger models satisfy.
\begin{theorem}\label{thm:E-Cc}
Along the $E-C$ degeneracy line, if $e_F\geq e_H$ then probability a vertex is in either the $H$ or $F$ states decays exponentially in the inverse temperature. The same is true for $E$ and $C$ states along the $H-F$ degeneracy line where $e_E\geq e_C$.
\end{theorem}

 \begin{figure}
\centering
\begin{minipage}{0.75\linewidth}
\includegraphics[width=\linewidth]{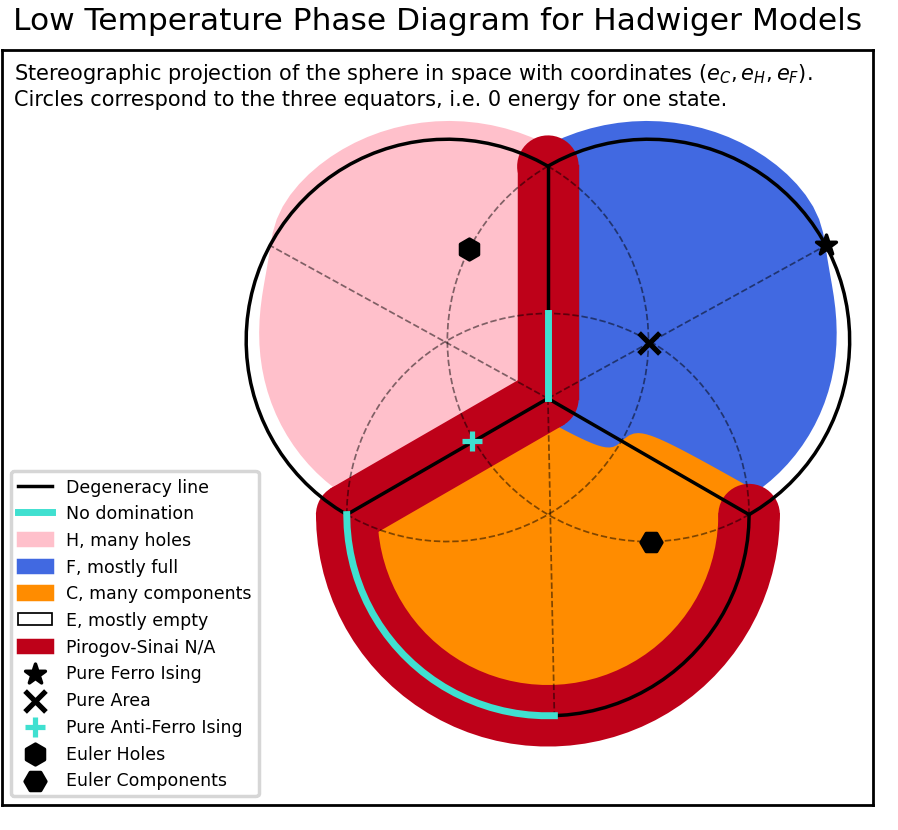}
\end{minipage}%
\begin{minipage}{0.25\linewidth}
    \caption{Phase diagram showing the regions where given configurations dominate. Dark regions surrounding the non-Peierls degeneracy lines indicate areas where Pirogov--Sinai techniques are not applicable. This region shrinks with decreasing temperature, but is never empty. Outside of those areas each color, including white, represents a region where a particular set of configurations dominates. On the boundaries between these regions multiple Gibbs states not related by a symmetry dominate. On some of the non-Peierls lines we use disagreement percolation to prove no domination at any temperature.}
    \label{fig:lowtemp}
\end{minipage}
\end{figure}

\section{Proofs of the the Main Theorems}

Before we make any specific arguments, first we should note the symmetry of the model. Flipping all the spins interchanges $E$ vertices for $F$ vertices, $C$ vertices for $H$ vertices, and vice versa. Thus, flipping the spins while swapping the energies assigned to the respective vertex states maintains the assignment of probabilities. With this in mind, any statement that concerns particular vertex states applies just as well to its ``twin" under spin-flip symmetry. We will frequently make an argument for one state, then apply the same to its twin by spin-flip symmetry.

Using techniques from Pirogov--Sinai, we address the large regions dominated by a particular vertex state first, in subsection \ref{sec:PS}. Next, with Slawny's technique we handle the position of the coexistence curves between these regions in subsection \ref{sec:CL}. Finally, with disagreement percolation and reflection positivity we determine the behavior along the degeneracy lines for which Pirogov--Sinai techniques are inapplicable, in subsection \ref{sec:NPL}. 

\subsection{Applying Pirogov--Sinai}
\label{sec:PS}

We begin by describing the ground states of the model, which define the regions in Fig.~\ref{fig:0temp}. Given two vertex states $A,B$, we will refer to the points where $e_A=e_B$ and both achieve the minimal vertex energy as the $A-B$ degeneracy line.

\begin{lemma}
Along the $H-F$, $E-C$, or $H-C$ degeneracy lines, there are infinitely many ground configurations. At all other points, there are finitely many ground configurations, all of which are periodic. 
\end{lemma}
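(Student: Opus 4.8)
The plan is to reduce every case to a local occupancy constraint on the triangular lattice $\mathcal{T}$ dual to the hexagons, and then to read off the number of solutions. First I would fix the dictionary: the centers of the hexagons form $\mathcal{T}$, two hexagons are adjacent exactly when their centers are, each vertex of the honeycomb corresponds to a triangular face of $\mathcal{T}$, and the three hexagons meeting at that vertex are the three sites of that face. A configuration is then a map $\phi\colon \mathcal{T}\to\{0,1\}$, the vertex state at a honeycomb vertex records the number of occupied sites on the corresponding face ($0,1,2,3$ for $E,C,H,F$), and, since $\mathcal{T}$ is properly $3$-colored with one color per face, the three hexagons at any vertex carry three distinct colors. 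By definition a ground configuration is one all of whose vertices lie in minimal-energy states; on the interior of a region this pins every face to a single occupancy value, while on an $A-B$ transition line every face is constrained to the allowed set of values, so any $\phi$ meeting that constraint is a ground configuration.

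For the finite cases I would exploit that two of the relevant constraint sets are parity classes over $\mathbb{F}_2$. On the $E-H$ line every face has occupancy in $\{0,2\}$, i.e. $\phi(a)+\phi(b)+\phi(c)=0$ on every face; adding the constraints of an up-triangle and the adjacent down-triangle forces $\phi$ to be constant along one lattice direction, so this $\mathbb{F}_2$-linear system reduces to the recurrence $x_{i+1}=x_i+x_{i-1}\pmod 2$, whose $2\times 2$ transition matrix has order $3$. Hence every solution is $3$-periodic and there are exactly four: the all-empty configuration and the three configurations occupying a union of two color classes. The interior of the $H$ region imposes occupancy exactly $2$ everywhere, which selects precisely the three solutions of uniform count $2$, namely the three two-color configurations. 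The $E-F$ line forces occupancy in $\{0,3\}$, i.e. all three hexagons at each vertex agree; connectedness of $\mathcal{T}$ then propagates a global constant, giving the two configurations all-empty and all-occupied. By the inversion symmetry of the model the $C-F$ line (occupancy in the odd class $\{1,3\}$) yields four configurations and the interior of the $C$ region yields three, as the inversion images of the $E-H$ line and the $H$ interior. In every case the solution set is finite and each solution is periodic.

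For the infinite cases I would exhibit explicit families. On the $E-C$ line every face has occupancy in $\{0,1\}$, i.e. the occupied sites form an independent set of $\mathcal{T}$; since two sites of the same color never share a face, every subset of a single color class is admissible, producing infinitely many ground configurations, and the $H-F$ line gives infinitely many by inversion. On the $H-C$ line every face has occupancy in $\{1,2\}$, i.e. no face is monochromatic; these are exactly the ground configurations of the triangular-lattice antiferromagnetic Ising model, equivalently the dimer model as already cited, which carry positive residual entropy and so are infinite in number. Finally I would check that each of the four triple points, where three vertex states are simultaneously minimal, lies on one of $E-C$, $H-C$, $H-F$ (for instance the point where $E$, $C$, $H$ are all minimal sits on the $E-C$ line), so these points inherit infinitely many ground configurations and are consistent with the exceptional set in the statement.

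The substantive step is the rigidity in the finite cases: recasting the combinatorial constraint as the $\mathbb{F}_2$ recurrence and verifying that its transition matrix has order $3$ is what actually rules out domain walls and aperiodic configurations and pins the counts to four and to three. The $E-F$ and interior $E$, $F$ cases are immediate, the $E-C$ and $H-F$ infinite families are explicit, and the only external input is the well-known residual entropy of the triangular antiferromagnet for the $H-C$ line.
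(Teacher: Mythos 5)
Your proof is correct and, for the degenerate lines, matches the paper's strategy (explicit energy-preserving infinite families), but your treatment of the finite cases takes a genuinely different route. The paper argues by local forcing and non-coexistence: in the $H$ and $C$ regions a single minimal vertex state determines its neighbors, so induction gives exactly three periodic ground configurations; and along the $E-H$, $E-F$, and $C-F$ lines the two minimal states cannot occur at adjacent vertices (the two shared hexagons would force a third, non-minimal state between them), so every ground configuration uses a single vertex state and the ground set is just the union of the two adjacent regions' sets. You instead translate everything into face-occupancy constraints on the dual triangular lattice and solve the constraint systems outright: the $\{0,2\}$ constraint on the $E-H$ line is an $\mathbb{F}_2$-linear system that your up/down-triangle cancellation collapses to the Fibonacci recurrence mod $2$, whose order-$3$ transition matrix yields exactly four periodic solutions, with the three ground configurations of the $H$ interior falling out as the uniform-occupancy-$2$ sub-case. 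Your algebraic route buys exact counts and periodicity in a single stroke, replaces the paper's asserted-but-unproved non-coexistence step with a verifiable computation, and is the only one of the two arguments that explicitly checks that the triple points (which lie on lines of both types) land on the infinite side of the dichotomy. The paper's route is shorter and purely local, and it delivers the ground sets on the Peierls lines directly in the form --- union of the neighboring regions' ground configurations --- that the subsequent Pirogov--Sinai analysis uses. One cosmetic remark: your appeal to the residual entropy of the triangular antiferromagnet for the $H-C$ line, while standard and consistent with the paper's own identification with lozenge tilings, can be avoided entirely; occupying one full color class together with an arbitrary subset of a second color class keeps every face occupancy in $\{1,2\}$, which exhibits the infinite family explicitly.
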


\begin{proof}

First, we consider the case where exactly one vertex state has minimal energy. If a configuration with only that vertex state is possible, it will be minimal. For the $F$ and $E$ vertex states, minimal energy is achieved by the full and empty configurations respectively; any other configuration must have other vertex states, and so the ground configuration is unique and translation invariant. For the $H$ and $C$ vertex states, we start with a single vertex. This could have three distinct assignments, as the $C$ state is realizable by three different three-hexagon configurations. However, given a trio of three hexagons and state $C$, there is only one configuration on its neighbors such that every vertex is in state $C$. By induction, this uniquely determines a periodic ground configuration given one of our original choices, so there are three periodic ground configurations. Along the $E-H$, $E-F$, and $C-F$ degeneracy lines, the minimal vertex states cannot coexist, i.e. neighboring vertices cannot have distinct minimal vertex states, they must be separated by a third vertex that is in a non-minimal state, as demonstrated in Fig.~\ref{fig:interfaces}. Therefore, all minimal configurations must have only one vertex state, so the set of minimal configurations is the union of the set on the two neighboring regions, and as such is finite and periodic.

\begin{figure} [b]
\centering
\begin{minipage}{0.75\linewidth}
\includegraphics[width=\linewidth]{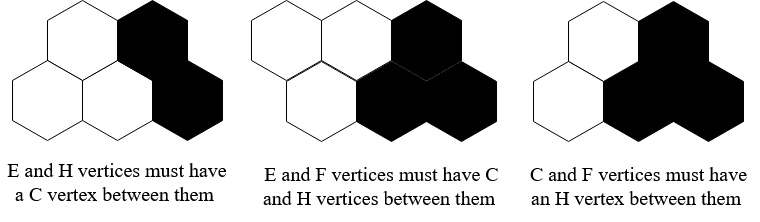}
\end{minipage}%
\begin{minipage}{0.25\linewidth}
    \caption{Each of the three pairs of vertex states, $E-H, {E-F}, C-F$ which cannot be directly adjacent, but must have a third vertex state between them}
    \label{fig:interfaces}
\end{minipage}
\end{figure}

Along the $H-F$, $E-C$, and $H-C$ boundary lines, minimal vertex states can coexist. Thus there are infinitely many energy-preserving modifications to particular ground configurations (for example, given the empty configuration, along the $E-C$ line, one can fill in any single hexagon), so the set of ground configurations is infinite. These ground configurations are also not in general periodic: the lower half-plane may have one vertex state, while the upper may have the other. 

\end{proof}

For Pirogov--Sinai arguments to apply, we need the model to satisfy the Peierls condition. 

\begin{lemma}All Hadwiger models satisfy the Peierls condition, except on the $H-F$, $E-C$, or $H-C$ degeneracy lines.\end{lemma}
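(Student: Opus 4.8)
The plan is to reduce the Peierls condition to two ingredients, one of which is already supplied by the previous lemma. Recall from that lemma that away from the $H-F$, $E-C$, and $H-C$ lines there are only finitely many ground configurations and all of them are periodic; after the standard passage to an enlarged configuration space that converts periodicity into translation invariance (as noted earlier via \cite{D85}), these form exactly the finite collection of translation-invariant minimal-energy configurations demanded by the Peierls condition. Conversely, on the three excluded lines the same lemma produces infinitely many ground configurations, so no finite reference collection can exist and the Peierls condition necessarily fails there; this is precisely why those lines must be excluded. Thus the qualitative dichotomy we must prove coincides with the finite-versus-infinite dichotomy already established.

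For the energy estimate I would use that the Hamiltonian is a sum of local vertex energies $f(\sigma_v)\in\{0,e_C,e_H,e_F\}$. Since every vertex of any ground configuration $\eta$ occupies a minimal vertex state, writing $e_{\min}$ for the minimal vertex energy and letting $\Gamma$ denote the contour, i.e. the set of vertices whose induced state is \emph{not} minimal (this is the object the phrase ``differs only on a finite set'' should be read as controlling), gives
\[
\HH(\omega)-\HH(\eta)=\sum_{v}\bigl(f(\omega_v)-e_{\min}\bigr)=\sum_{v\,:\,\omega_v\text{ non-minimal}}\bigl(f(\omega_v)-e_{\min}\bigr)\geq \delta\cdot|\Gamma|,
\]
where $\delta:=\min\{e_Y-e_{\min}:e_Y>e_{\min}\}$. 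The content of this step is only to verify $\delta>0$ off the excluded lines. In the interior of any region a single state is strictly minimal, so $\delta>0$ at once; on the $E-H$, $E-F$, and $C-F$ lines exactly two states tie for the minimum while the other two are strictly higher, so again $\delta>0$. The gap collapses precisely when three or more vertex states tie at the minimum, and a short check over the four triples of states shows that any such coincidence forces the point onto one of the three excluded lines. Hence $\delta>0$ on the entire non-excluded set.

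The remaining point, which I expect to be the genuine obstacle, is \emph{rigidity}: I must guarantee that whenever $\omega$ truly disagrees with every reference configuration the contour $\Gamma$ is nonempty, so that the estimate above is not vacuous and, in Pirogov--Sinai terms, domain walls carry positive energy per unit length. For the $E$ and $F$ regions this is elementary, since the ground configuration is empty (respectively full) and every flipped hexagon places all six of its vertices in non-minimal states, so $\Gamma$ is immediately nonempty. The delicate cases are the $H$ and $C$ regions (by inversion symmetry it suffices to treat one), whose three sublattice ground configurations are nontrivial patterns: here I must rule out cost-free domain walls, i.e. show that two distinct all-$H$ domains cannot meet along an interface consisting solely of $H$ vertices. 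This is exactly the induction behind the previous lemma, for such an interface would itself be a minimal-energy configuration distinct from the three periodic ones and freely translatable, manufacturing infinitely many ground configurations and contradicting their finiteness. Consequently every domain wall and every local excitation deposits non-minimal vertices, each costing at least $\delta$, which yields $\HH(\omega)-\HH(\eta)\geq \delta|\Gamma|$. The $E-H$, $E-F$, and $C-F$ lines are handled identically, the only extra input being the previous lemma's observation that the two tied minimal states cannot be neighbors, so domains carrying different minimal states are again separated by genuinely excited contour vertices.
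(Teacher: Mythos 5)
Your proposal takes a genuinely different route from the paper --- the paper observes that Hadwiger Hamiltonians are $m$-potentials and then simply cites the criterion from \cite{FV17} that an $m$-potential with finitely many minimal configurations satisfies the Peierls condition, whereas you attempt the energy estimate by hand --- but the hand verification has a genuine gap: the two halves of your argument use incompatible notions of $\Gamma$. You redefine $\Gamma$ as the set of vertices in non-minimal states; with that reading, your displayed inequality holds at \emph{every} point of parameter space, including the excluded lines. For example, on the $E$-$C$ line one has $e_E=e_C=e_{\min}<\min(e_H,e_F)$, so $\delta=\min(e_H,e_F)-e_{\min}>0$; filling a single hexagon in the empty configuration costs zero energy and creates zero non-minimal vertices (all six affected vertices move from $E$ to $C$, both minimal), so the bound is satisfied as $0\geq\delta\cdot 0$. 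Under your formulation the Peierls condition would therefore \emph{hold} on the excluded lines, contradicting the negative half of the lemma, which you argue separately and implicitly with the other, disagreement-set reading of $\Gamma$ (``no finite reference collection can exist''). Your instinct that the definition needs reinterpreting is right --- with the literal disagreement-set reading the \emph{positive} half fails even in the $H$ region, since placing a large block of one sublattice ground configuration inside a sea of another costs energy proportional to the interface length while the disagreement set grows like the area --- but the reinterpretation that makes both halves true is the contour taken \emph{relative to the finite reference family}: the set of sites whose local window matches none of the reference configurations. With that notion the failure on the excluded lines is witnessed by the zero-cost flips (nonempty contour, zero cost), which is exactly the paper's argument there.

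Under the correct contour notion, what is then missing from your positive half is \emph{quantitative} rigidity: you need $|\Gamma|\leq C\cdot|\{v:\omega_v\ \text{non-minimal}\}|$, i.e.\ every contour site lies within bounded distance of a non-minimal vertex, so that your $\delta$-estimate converts into energy per unit of contour. Your argument --- that a cost-free domain wall would manufacture infinitely many ground configurations, contradicting finiteness --- only shows such walls are impossible, i.e.\ that the set of non-minimal vertices along an interface is nonempty; it does not give a positive density of excited vertices along the wall. The repair is to run the local-determination induction from the previous lemma \emph{locally}: if every vertex in a radius-$R$ ball is in a minimal state, the configuration on that ball coincides with one of the finitely many ground configurations, hence any contour site has a non-minimal vertex within distance $R$. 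Finally, your sentence ``the gap collapses precisely when three or more vertex states tie at the minimum'' is false: $\delta$ is positive whenever at least one state is strictly non-minimal, and on the excluded lines typically only two states tie, with $\delta>0$. This error is harmless for your claim that $\delta>0$ off the excluded lines, but it is symptomatic of the confusion above --- the excluded lines fail the Peierls condition not because the spectral gap vanishes there, but because the two tied minimal states can coexist in a single configuration, producing zero-cost excitations with nonempty contour.
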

\begin{proof}
An $m$-potential is a function on $\omega\in \{-1,1\}^S$ that can be written as $\sum_{U\subset S}f_U(\omega)$, where each $f_U$, called a ``potential", depends only on elements in $U$, and there exists a configuration $\omega$ such that $f_U$ achieves its minimal value for all $U$. As described above, any Hadwiger Hamiltonian can be represented as the sum over the vertices of terms that depend only on the neighboring hexagons, and may have every vertex in a minimum-energy state, so the Hadwiger models are $m$-potentials. If an $m$-potential's set of minimal configurations is finite, it satisfies the Peierls condition \cite{FV17}. As just demonstrated, outside the $H-F$, $E-C$, or $H-C$ degeneracy lines the set of minimal configurations is finite, so at those points the Peierls conditions is satisfied. 

Along the $H-F$, $E-C$, and $H-C$ degeneracy lines, there are infinitely many ground configurations; these are achieved by starting from one ground configuration and flipping any of the infinitely many hexagons that do not increase the energy. Thus, they do not satisfy the Peierls condition and have nonzero entropy at 0 temperature. In particular, the ground states of $H-F$ and $E-C$ are the allowed states of the hard hexagon model, while the ground states of the $H-C$ model are ground states of the frustrated antiferromagnetic Ising model, or equivalently, random lozenge tilings \cite{G21}.
\end{proof}

The low temperature behavior in the large two-dimensional regions of the phase diagram is a straightforward consequence of Pirogov--Sinai Theory.

\begin{proof}[\textbf{Proof of Theorem \ref{thm:H&C}}]
We can introduce two periodic but not rotationally invariant ``dummy fields" $f_i$ that distinguish between the three forms of the $C$  vertex state, $\{C_i\}$, by $f_i(C_j)=\delta(i,j)$. Note that we only need two fields, as up to a constant the third can be represented implicitly by negative energy on the other two states. When these fields have nonzero coefficient, they lift the degeneracy of the three configurations, so by Pirogov--Sinai \cite{PS75}, $\exists T_0$ such that $\forall T<T_0$, for each form there is a choice of parameters where this forms dominates, and that at their coexistence point there is an equal chance of any dominating. By the symmetry of three ground configurations, we know this coexistence point must be at the point where both dummy fields are 0, i.e. the original model.

Zahradnik \cite{Z84} states for any periodic Peierls model with finite ground configurations $\exists T_0$ such that $\forall T<T_0$, every periodic Gibbs state is a linear combination of extremal Gibbs states dominated by a periodic ground configuration. Dobrushin and Shlosman \cite{DS85} state that for 2-dimensional models, under the same conditions, every Gibbs state is periodic. Therefore, for any 2-dimensional periodic Peierls model with finite ground configurations $\exists T_0$ such that $\forall T<T_0$, every Gibbs state is a linear combination of those dominated by the ground states. The model always satisfies the other conditions, so at every Peierls point at sufficiently low temperatures, the space of Gibbs state is exactly those dominated by the ground states. 

\end{proof}

\subsection{Coexistence Curves}\label{sec:CL}

At fixed low temperature, there are three coexistence curves to consider, the latter two of which are equivalent under spin-flip symmetry: $E-F$, $E-H$, and $C-F$. Zahradnik tells us that coexistence curves approach the zero-temperature degeneracy lines, but does not tell us from which direction they approach it, or whether they achieve it at some nonzero temperature. There is a technique by Slawny \cite{S87} to calculate the asymptotic low-temperature positions of these coexistence curves explicitly, by calculating at each point which of the ground configurations has the lowest-energy or highest-multiplicity minimal excitation. We use this technique in a slightly unconventional way.  Normally Slawny's technique is used to determine the location of a low-temperature triple point, where three coexistence curves meet, relative to a Peierls zero-temperature triple point. No Peierls triple point exists here, but we can still determine the relative location of the coexistence curve by calculating 1-dimensional ``slices" perpendicular to each point along the zero-temperature degeneracy line. In each slice, there is a nonzero $T$ such that the transient terms are smaller than the asymptotic term in the expansion we describe below. From Slawny we know the terms vary continuously, so on a closed region there will be a nonzero minimum, such that the overall curve (which by implicit function theorem must be continuous) approximates the asymptotic curve.

\begin{proof}[\textbf{Proof of Theorem \ref{thm:E-H}}]

Along this line, the $E$ and $H$ vertex states have energy 0, and moving along the line changes the energy of $F$ and $C$. The space of models lies on a sphere, but we're interested only in the asymptotic shape and position relative to the degeneracy lines, so we can change coordinates at will as long as it does not affect relative position. Project to a plane to simplify coordinates, so along the degeneracy line $e_F=f\in [\epsilon,1-\epsilon]$ and $e_C=1-f$. Moving perpendicular to this line, $e_H=h$. Following Slawny, we can treat the three symmetric ground configurations as a single ground configuration for the purposes of the calculation.

Slawny's technique consists of solving the equation
\begin{equation}
\dot{P}^G(\beta \mathcal{H}_0+\mathcal{H})-e_G(\mathcal{H}')=\dot{P}^{G'}(\beta \mathcal{H}_0+\mathcal{H}')-e_{G'}(\mathcal{H}')
\end{equation}
where $\dot{P}^G$ is the pressure of the gas of excitations from the ground configuration $G$ with respect to a given Hamiltonian, $\mathcal{H}_0$ is the Hamiltonian you're expanding around, $\mathcal{H}'$ is the perturbation to the Hamiltonian, and $e_G$ is the energy density of $G$ with respect to a given Hamiltonian, equivalent to the energy the Hamiltonian assigns to the associated vertex state. By ``pressure of the gas of excitations", we mean $\log(\sum_{m\in M} e^{-\HH(M)})$, where $M$ is the set of configurations that disagree with your ground configuration at only finitely many points, and have energy below your chosen threshold. Such a solution identifies the points where the modified pressures of the two ground states are equal, and so both have corresponding Gibbs states: the coexistence curve. This equation is difficult to solve explicitly, so we instead solve up to the leading term in the expansion in $e^{-\varepsilon_i}$, where $\varepsilon_i\in \mathcal{E}$, the set of possible excitation energies. Denote equation up to leading term by $\approx$. We find the coefficients using the cluster expansion. Because we only consider the leading order term, which consist of single polymers, the coefficient in $\dot{P}^G$ simplifies to the density of minimal excitations. In the following calculations, we will use $\beta$ for $1/T$, to simplify the notation. 

The $E$ configuration only has one kind of minimal excitation, flipping a single hexagon. This changes 6 vertices in state $E$ to 6 vertices in state $C$, changing the energy by $1-f$. The perturbation of the Hamiltonian assigns $E$ 0 energy, so that term disappears.

The $H$ configuration has one of two possible minimal excitations, depending on the choice of parameters.  Flipping one of the empty hexagons which form 1/3 of the configuration replaces 6 $H$ vertices with $F$ vertices, changing the energy by $6(f-h)$. Flipping a full hexagon replaces 6 $H$ vertices with $C$ vertices, changing the energy by $6(1-f-h)$. Combining this information, we solve 
\begin{equation}
e^{6(1-f)\beta}\approx \frac{1}{3}e^{6f\beta-6h}+\frac{2}{3}e^{6(1-f)\beta-6h}+h.
\end{equation}
We set $h=0$ in the exponent, because when $h<<A$, $e^{A+h}-e^A$ is lower order than $e^A$. Thus, we have
\begin{equation}h\approx \frac{1}{3}e^{6(1-f)\beta}-\frac{1}{3}e^{6f\beta}.\end{equation} 
Therefore $h>0$ when $f<1/2$ and $h<0$ when $f>1/2$.

At $f=1/2$, there is a symmetry between the $E$ and $H$ excitation expansions. For any minimal excitation, flipping the value of hexagons in two sublattices replaces vertices in state $E$ with $H$ and vice versa, while vertices in state $F$ are replaced with $C$, again vice versa. Therefore if $e_H=e_E$ and $e_F=e_C$, the multiplicities and energies of excitations around $E$ and $H$ will be the same, the pressures will be the same, and so by symmetry the curve of the phase transition must pass through $f=1/2$ at $h=0$. This implies the coexistence curve passes through the point at $f=1/2$, with the $E$ phase extending over the line when $f>1/2$ and the $H$ phase extended over the line when $f<1/2$.

The behavior along $C-F$ curve is the same, by spin-flip symmetry.

\end{proof}

\begin{proof}[\textbf{Proof of Theorem \ref{thm:E-F}}]
Again we project to a plane, $F$ and $E$ both have energy $0$, $e_h=h\in [-1+\epsilon,1-\epsilon]$, and $e_C=1-h$ in $[-1+\epsilon,1-\epsilon]$. Moving perpendicular to this line, $e_F=f$. In the $E$ configuration, flipping an empty hexagon replaces 6 $E$ vertices with $C$ vertices, changing the energy by $1-h$, while in the $F$ configuration, flipping  a full hexagon replaces 6 $F$ vertices with $H$ vertices, changing the energy by $h-f$. Thus, we solve
\begin{equation} e^{(1-h)\beta}\approx e^{h\beta-f}+f \end{equation}
Again setting the varying element to 0 in the exponent, we get
\begin{equation}f\approx e^{(1-h)\beta}-e^{h\beta}\end{equation}
so $f>0$ when $h<1/2$ and $f<0$ when $h>1/2$. By the symmetry of the two ground states, we know this curve must be odd around the equality point, and thus passes exactly through 0 there.
\end{proof}

\subsection{Non-Peierls Lines}\label{sec:NPL}

Now we handle two of the three non-Peierls infinitely degenerate degeneracy lines. We can't use Pirogov--Sinai here, so we will instead use disagreement percolation to prove uniqueness, then apply reflection positivity to understand the behavior. Our argument relies on the following uniqueness condition \cite{BM94}:
\begin{theorem}[Disagreement Percolation]\label{disperc}
Fix a particular Markov field, and denote $N_i$ as the set of neighbors of a single site, $d(\cdot,\cdot)$ as the variational distance, and $Y_i(\cdot, \eta)$ as the one-site distribution given its neighbors. If $p_i<p_c$, where $p_i=\max_{\eta,\eta'\in \{-1,1\}^{N_i}}d(Y_i(\cdot,\eta),Y_i(\cdot,\eta')$ and $p_c$ is the critical site percolation threshold for the given lattice, then there is exactly one Gibbs measure induced by the Markov field.
\end{theorem}

\begin{corollary}\label{thm:E-Ca}
Along the $E-C$ degeneracy line, if $e_F\geq e_H$, there is a unique Gibbs state at all temperatures. By spin-flip symmetry, the same is true along the $F-H$ line where $e_E\geq e_C$.
\end{corollary}
\begin{proof}

When $e_F\geq e_H$, filling the central hexagon never increases the energy, so it always has probability greater or equal to $1/2$ regardless of the boundary conditions. The probability of a filled hexagon is always less than 1, so the variational distance between any two boundary conditions is less than 1/2, the critical site percolation threshold for the triangular lattice \cite{K82}, and so we have uniqueness for all $T$. When $e_F<e_H$, then the maximum variational distance approaches 1: for entirely filled boundary conditions an empty central hexagon is preferred, but for boundary conditions with a single filled face, a filled central hexagon is preferred.

\end{proof}

\begin{figure}[t]
    \centering
    \includegraphics[width=\linewidth]{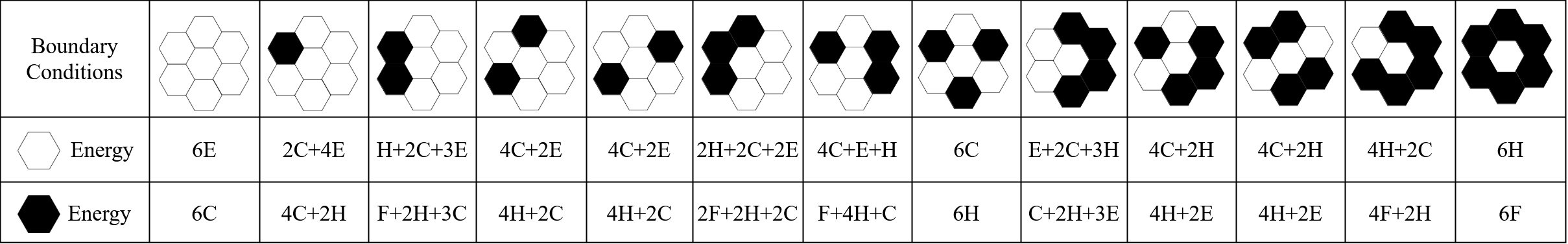}
    \caption{Quantities of the vertex states given either face assignment for each boundary condition, up to rotation and reflection. With this, the relative probability of the face assignments can be determined for any boundary condition and energy assignment, to determine the variational distance between induced probability distributions for disagreement percolation.}
    \label{fig:boundaries}
\end{figure}

\begin{proof}[\textbf{Proof of Theorem \ref{thm:E-Cb}}]
From  Corollary~\ref{thm:E-Ca} above, we know that the Gibbs state is unique. If a Gibbs state is unique, it must also be translation-invariant. Therefore, any dominating configuration must also be translation invariant. However, the only translation-invariant configurations are $E$ and $F$. The $F$ configuration has maximal energy and so is minimally likely among all configurations. If we assume the $E$ configuration dominates then the probability of any face being in state $E$ must go to unity as $T\to 0$. However, any face whose neighbors are all in state $E$ can have its value flipped without increasing the energy, so that configuration must be equally likely as the $E$ configuration, conditional on empty neighbors. Thus, the probability of a filled hexagon cannot go to 0: it must be no less than $1/2$ the probability that all its neighbors are empty. If the probability that all its neighbors are empty goes to 0, the probability of a filled neighbor cannot go to 0, and we have a contradiction by translation invariance.
\end{proof}

Although we lack domination, at low temperatures the set of allowed configurations is still constrained with high probability. However, instead of approximating a particular configuration, they are constrained to avoid particular vertex states. Because the Gibbs measure is unique and the space of measures is sequentially compact, the unique infinite Gibbs measure is achieved by any sequence of local measures with infinite limiting radius. Thus, we can restrict our attention to measures defined on the rectangular torus. A reflection is defined by a partition of the torus into two halves, by a pair of horizontal or vertical lines, such that reflecting across the lines preserves the lattice. We will use the ``closed half-space" to refer to the set of faces intersecting of one of these two halves. The particular reflections we're considering here are represented in Fig.~\ref{fig:reflections}.

\begin{figure} [b]
\centering
\begin{minipage}{0.65\linewidth}

    \includegraphics[width=\linewidth]{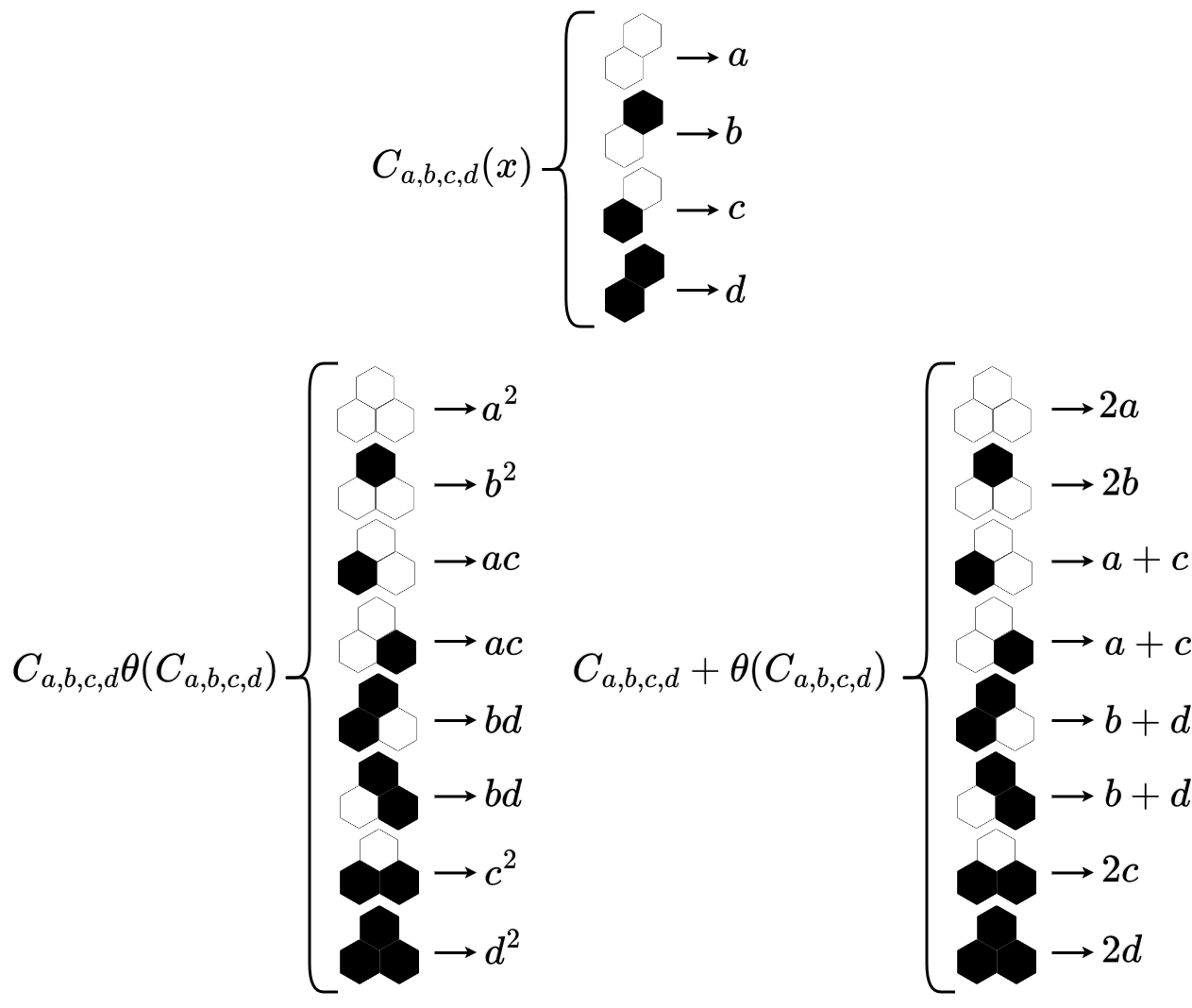}
    
\end{minipage}%
    \hspace{\fill}%
\begin{minipage}{0.3\linewidth}
    \caption{(left)Local energy functions considering two boundary hexagons, and the induced functions on three boundary hexagons, two in each half space, with the top hexagon in both closed half-spaces. With these we construct indicator functions for each vertex state}
    \label{fig:Cabcd}
    \vspace*{10pt}
    \includegraphics[width=\linewidth]{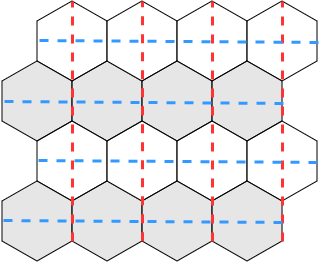}
    \caption{Horizontal and vertical reflections define blocks containing one full vertex}
    \label{fig:reflections}
\end{minipage}

\end{figure}

\begin{theorem} All Hadwiger models on the torus satisfy reflection positivity, i.e if $\theta$ is a horizontal reflection through faces or a vertical reflection through faces and vertices, and $f,g\in A^+$, the set of functions on the closed half-space of the reflection, $\langle f\theta (g)\rangle=\langle g\theta (f)$ and $\langle f\theta (f)\rangle\geq 0$ \end{theorem}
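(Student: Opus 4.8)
The plan is to verify the two defining properties separately: the symmetry identity by a change of variables, and the positivity by the standard factorization criterion for reflection positivity (Fröhlich--Israel--Lieb--Simon) applied to the Gibbs weight $W=e^{\mathcal{H}/T}$. The identity $\langle f\theta(g)\rangle=\langle g\theta(f)\rangle$ is the easy half: since $\theta$ is a lattice isometry it preserves the counting measure on $\Omega$, and since $\mathcal{H}$ is isometry-invariant it preserves $W$. Substituting $\sigma\mapsto\theta\sigma$ in $\langle f\theta(g)\rangle=\frac1Z\sum_\sigma f\,\theta(g)\,W$ and using $\theta^2=\mathrm{id}$ turns it into $\langle\theta(f)\,g\rangle=\langle g\theta(f)\rangle$, so only the positivity $\langle f\theta(f)\rangle\ge 0$ requires real work.

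For positivity I would decompose $\mathcal{H}$ along the reflection. Working in the dual picture, where the hexagons are the sites of a triangular lattice and each vertex energy is a three-body term supported on a triangular plaquette, I would sort the plaquettes into those lying in the closed positive half-space, those in the closed negative half-space, and those bisected by the axis. Reflection invariance of $\mathcal{H}$ identifies the negative-side weight as $\theta$ of the positive-side weight, so $W=W_+\,\theta(W_+)\,W_\times$, where $W_\times$ collects the bisected plaquettes. For the horizontal reflection through faces the axis runs along a row of hexagon centers, and no plaquette is bisected, since every triangle lives in a single gap between two adjacent rows; hence $W_\times\equiv 1$, and fixing the shared on-axis spins $\sigma_0$ and summing gives $\langle f\theta(f)\rangle=\frac1Z\sum_{\sigma_0}\big(\sum_{\sigma_+}fW_+\big)^2\ge 0$ immediately.

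The vertical reflection is where the difficulty lies, and this is precisely why it must pass through both faces and vertices: the axis now cuts across the rows, so bisected plaquettes occur, each coupling a hexagon $\sigma_r\in\Lambda_+$ to its mirror $\sigma_l\in\Lambda_-$ together with one hexagon on the axis. After fixing the on-axis spins, $W_\times$ factorizes over mirror pairs into $2\times2$ kernels $M_{\sigma_r\sigma_l}=\exp\big(\tfrac1T[g_{\mathrm{top}}(\sigma_r,\sigma_l)+g_{\mathrm{bot}}(\sigma_r,\sigma_l)]\big)$, which are symmetric in $\sigma_r\leftrightarrow\sigma_l$ because each vertex energy depends only on the number of filled hexagons. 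The factorization criterion then reduces reflection positivity to positive-semidefiniteness of every such $M$, i.e.\ to $M_{00}M_{11}\ge M_{01}^2$, which upon taking logarithms becomes a discrete second-difference (convexity) condition on the vertex energies, governed by the quantities $e_H-2e_C$ and $e_C+e_F-2e_H$.

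I expect verifying this positivity for the full parameter range to be the main obstacle: the second differences above are not of a fixed sign across all $(x,p,a)$, since the perimeter and Euler contributions can make them either positive or negative. The plan would be to handle the unfavorable parameter regions by a reflection-adapted gauge transformation (a spin flip on one of the three triangular sublattices, which permutes the vertex states $E,C,H,F$ and can convert an ``antiferromagnetic'' crossing sign into a reflection-positive one), and then to check directly that $M$ is positive semidefinite for all parameters after this reduction; confirming that such a transformation commutes suitably with $\theta$ is the delicate point. Finally, since the unique infinite-volume Gibbs measure is the limit of the finite-torus measures and reflection positivity passes to the weak limit, establishing it on every finite torus suffices.
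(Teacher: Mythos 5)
Your handling of the symmetry identity and of the horizontal reflection agrees with the paper (the paper also disposes of the horizontal case by pure locality, since no vertex potential straddles that axis). For the vertical reflection you take a genuinely different route: you work with the Gibbs weight and reduce reflection positivity to positive semidefiniteness of the $2\times 2$ crossing kernels, whereas the paper works with the exponent, invoking the criterion that $\mathcal{H}/T=A+\theta(A)+\sum_i C_i\theta(C_i)$ with $A,C_i\in A^+$ implies reflection positivity, and tries to realize the on-axis vertex potentials in that form by assembling indicator functions of vertex states from its pair functions $C_{a,b,c,d}$. The concrete gap in your proposal is the rescue you sketch for the unfavorable parameter region: no sublattice spin flip can repair it. All on-axis hexagons lie in a single class of the $3$-coloring (consecutive on-axis faces differ by a color-preserving translation), and the reflection fixes that class while interchanging the other two. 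Hence the only flips commuting with $\theta$ are the flip of the on-axis class, which merely exchanges your two kernels (swapping the roles of the conditions $e_E+e_H\ge 2e_C$ and $e_C+e_F\ge 2e_H$ rather than removing either), and the simultaneous flip of the two interchanged classes, which conjugates each kernel by the swap matrix and leaves its spectrum unchanged. Flipping one off-axis class alone does not commute with $\theta$, so it yields reflection positivity only for a different reflection, useless for the chessboard estimate. Your plan therefore cannot be completed as written.

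You should know, however, that the obstruction you found is genuine and not an artifact of your method: your PSD condition is necessary as well as sufficient. If some crossing kernel $M$ has a negative eigenvalue with eigenvector $u$, take $f=\mathbf{1}[\sigma_{\mathrm{axis}}=\tau]\,u(\sigma_r)/W_+\in A^+$; then $\langle f\theta(f)\rangle$ factors into $u^{T}Mu<0$ times sums of strictly positive entries of the remaining kernels. So reflection positivity for the face-and-vertex reflection is \emph{equivalent} to $e_E+e_H\ge 2e_C$ and $e_C+e_F\ge 2e_H$ (in the paper's $e^{\mathcal{H}/T}$ convention), and since the Hadwiger family realizes every triple $(e_C,e_H,e_F)$, the statement cannot hold in the claimed generality. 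Writing $w_X=e^{e_X/T}$, it fails for antiferromagnetic (perimeter-dominant) parameters, and on the interior of the $E$--$C$ transition line, where $w_E=w_C>w_H$ makes the kernel
\begin{equation*}
\begin{pmatrix} w_E^2 & w_C^2\\ w_C^2 & w_H^2\end{pmatrix}
\end{equation*}
have negative determinant --- precisely the regime where the paper later applies the chessboard estimate. The paper's own proof does not evade this: its formulas produce indicators only of the reflection-symmetric crossing states, while the asymmetric states $\sigma_r\ne\sigma_l$ restrict (at the appropriate on-axis value) to the XOR kernel $\left(\begin{smallmatrix}0&1\\1&0\end{smallmatrix}\right)$, whose negative eigenvalue cannot be removed by one-sided corrections $d(\sigma_r)+d(\sigma_l)$ (this would force $d_0,d_1\le 0$ and $4d_0d_1\ge(1-d_0-d_1)^2$, which is impossible), and whose contribution can be absorbed into $\sum_i C_i\theta(C_i)$ only when its coefficient has the favorable sign --- a sign constraint the paper never checks. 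So the honest comparison is: your proof is incomplete, but it is incomplete because the theorem is false at this level of generality, whereas the paper's proof reaches the stated conclusion only by passing silently over the same obstruction. What both approaches do establish (yours more transparently) is unconditional reflection positivity for the horizontal reflection, and reflection positivity for the vertical one exactly under the two convexity inequalities above.
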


\begin{proof}

The first equation follows immediately from $\theta$ invariance of the Hamiltonian under both kinds of reflections. For horizontal reflections through the faces, the locality of the model immediately implies the second relation holds; see the argument in \cite{HP22}. For vertical reflections through sites and bonds, we use the following statement: if the Hamiltonian can be realized as $A+\theta (A)+\sum_i C_i\theta (C_i)$, with $A, C_i\in A^+$, the induced measure is reflection positive \cite{FV17}. Each vertex internal to one of the half-spaces has an associated potential $A\in A^+$, and a corresponding vertex in the other half with the potential in $\theta(A)$. However, the vertices along the border do not have potentials in $A^+$, so representing the associated potential in these terms is nontrivial. To achieve this, we parametrize all functions $C_{a,b,c,d}\in A^+$ that depend only on a pair of boundary hexagons by the energy they assign to each of the four possible configurations. Then, we combine them to look at the associated functions on triplets in $A^+\theta( A^+)$ and $A^++\theta(A^+)$. These functions are represented visually in Fig.~\ref{fig:Cabcd}.

We can construct an indicator function for the $E$ and $F$ vertex states with $C_{1,0,0,0}\theta (C_{1,0,0,0})$ and $C_{0,0,0,1}\theta (C_{0,0,0,1})$ respectively. For the $H$ vertex state we can construct an indicator functions with \\${C_{1,1,1,0}\theta( C_{1,1,1,0})+C_{-1/2,0,-1/2,0}+\theta (C_{-1/2,0,-1/2,0})}$, and similarly for the $F$ vertex state. Thus any Hadwiger energy function, represented in terms of relative energy values of vertex states, leads to a reflection positive measure on the torus. 
\end{proof}

With reflection positivity we are able to use the chessboard estimate. Define a ``block" as the fundamental domain of the set of reflections including every horizontal reflection and every other vertical reflection, displayed in Fig.~\ref{fig:reflections}. This will include one vertex in its center, and two along its edge. For a local event $A$ on a single block $B$, we can define an event $A_i$ on any other block $B_i$ by applying the necessary reflections to map that block to $B$. Denote by $\alpha$ the global event that $A_i$ occurs simultaneously on all blocks, indexed by $S$. The chessboard estimate states \cite{FV17}:
$\PP(A)\leq \PP(\alpha)^{1/|S|}\,.$

\begin{proof}[\textbf{Proof of Theorem \ref{thm:E-Cc}}]

We decompose the event into $\sigma= F$ and $\sigma=H$. First, consider $\sigma=F$. In this case, if $\sigma_i=F \forall i\in S$, then every hexagon is filled, so $\sigma_i=F, \forall i$, which corresponds to exactly 1 configuration. The total number of vertices is twice the number of blocks, so the weight of this configuration is $e^{-2e_F|S|/T}$ and the partition function of the whole system is bounded below by 1, so 
$$\PP(\sigma=F)<(e^{-2e_F|S|/T})^{1/|S|}=e^{-2e_F/T}.$$ Now we consider $\sigma=H$. There are no more than $3^{|S|}$ configurations such that ${\sigma_i=H} ,\,\forall i\in S$, because each block can take no more than 3 independent configurations with state $S$. The non-central vertices may be in a lower energy state, but the energy of the total configuration must be at least $e_H|S|$, so 
$$\PP(\sigma=H)<(3^{|S|}e^{-e_F|S|/T})^{1/|S|}=3e^{-e_F/T}.$$ Thus $\PP(\sigma=H\text{ or }F)<e^{-2e_F/T}+3e^{-e_F/T}$. The same statement holds along the $H-F$ degeneracy line, when $e_E\geq e_C$.
\end{proof}
With this, we can make the connection to the hard hexagon model explicit at sufficiently low temperatures. 
For a fixed infinite configuration $\eta$, denote the probability of an event $A$ for the subcritical (i.e. no domination) Hadwiger model at temperature $T$ on the finite region $R$ with boundary conditions induced by $\eta$ as $\PP^R_{S,\eta,T}(A)$ and the distribution of the Hard Hexagon model with $z=1$ (i.e. even distribution) on the same region with the same boundary conditions by $\PP^R_{H,\eta}$. Denote their respective infinite volume limits by $\PP_{S,\eta,T}(A)$ and $\PP_{H,\eta}$.

\begin{theorem}
For a fixed infinite boundary condition $\eta$ and local event $A$ depending on the set $\lambda$, $\lim_{T\to 0} \PP_{S,\eta,T}(A)=\PP_{H,\eta}(A)$
\end{theorem}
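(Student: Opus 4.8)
The plan is to show that, as $T\to0$, the Hadwiger measure collapses onto its ground configurations, which along the $E-C$ line (where $e_E=e_C=0$ and $e_F\ge e_H>0$) are precisely the hard-hexagon-allowed configurations, and that on this ground set all configurations carry equal weight, so the conditional law is exactly the uniform, i.e.\ $z=1$ hard hexagon, measure. First I would record the key algebraic fact: a vertex is in state $H$ or $F$ exactly when two filled hexagons meet there, so ``no $H$ or $F$ vertex'' coincides with the hard-hexagon constraint; and since $e_E=e_C=0$, every configuration free of $H,F$ vertices has energy $0$ and hence equal Gibbs weight. Consequently, for a fixed finite region $R$ with boundary $\eta$, conditioning $\PP^R_{S,\eta,T}$ on the event $G_R$ that no $H,F$ vertex touches $R$ yields exactly the finite-volume hard hexagon measure $\PP^R_{H,\eta}$, independently of $T$.

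Next I would pass to the limit at fixed volume. For fixed $R$ the configuration set is finite and every non-ground configuration has strictly positive energy, so $\PP^R_{S,\eta,T}(G_R^c)\to0$ and therefore $\lim_{T\to0}\PP^R_{S,\eta,T}(A)=\PP^R_{H,\eta}(A)$. The remaining, and central, task is to interchange this with the infinite-volume limit, so as to conclude $\lim_{T\to0}\PP_{S,\eta,T}(A)=\PP_{H,\eta}(A)$.

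The hard part is exactly this exchange of limits: naively the probability of seeing some $H$ or $F$ vertex is a sum over all vertices, so the ``bad'' event is not globally small and $\PP^R_{S,\eta,T}(G_R^c)$ does not vanish uniformly in $R$. I would resolve this through locality. Since $A$ depends only on $\lambda$, only excitations near $\lambda$ are relevant, and the disagreement percolation estimate from Theorem~\ref{disperc} supplies a correlation-length bound that is uniform in $T$: along this line $p_i\le 1/2<p_c$ for every temperature, since filling the central hexagon never raises the energy, so its conditional probability is at least $1/2$ and strictly below $1$. Hence the influence of the configuration outside a buffer region $\Lambda\supset\lambda$ on the event $A$ is dominated by the probability of a disagreement path from $\lambda$ to $\partial\Lambda$, which decays exponentially in $\mathrm{dist}(\lambda,\partial\Lambda)$ at a rate controlled by $p_c-1/2>0$, uniformly in $T$. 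The identical estimate holds for the $z=1$ hard hexagon model, whose one-site conditional probabilities also give $p_i=1/2$, yielding both its uniqueness and the uniform convergence of $\PP^R_{H,\eta}(A)$ to $\PP_{H,\eta}(A)$.

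Finally I would assemble the estimate by a Moore--Osgood-type triangle inequality, bounding $|\PP_{S,\eta,T}(A)-\PP_{H,\eta}(A)|$ by the two uniform-in-$T$ truncation errors (infinite volume versus buffer $\Lambda$, one for each model) plus the term $|\PP^\Lambda_{S,\eta,T}(A)-\PP^\Lambda_{H,\eta}(A)|$ on the fixed finite region. For $\Lambda$ large the truncation errors are at most $\varepsilon$ uniformly in $T$ by the preceding paragraph, and for that fixed $\Lambda$ the middle term tends to $0$ as $T\to0$ by the conditional identity and concentration on ground states. Letting $T\to0$ and then $\varepsilon\to0$ gives the claim. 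Since both infinite-volume measures are unique, both sides are in fact independent of $\eta$, so one may take $\eta$ itself hard-hexagon-allowed, which removes any boundary-compatibility subtlety in the conditional identity.
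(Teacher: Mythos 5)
Your overall architecture (the conditional identity with the hard hexagon measure, concentration on ground configurations, a uniform-in-$T$ locality estimate via disagreement percolation, and a Moore--Osgood exchange of limits) matches the paper's, but a key numerical premise in your proposal is false, and it breaks exactly the half of the argument concerning the hard hexagon side. The relevant graph for disagreement percolation is the adjacency graph of the hexagonal faces, i.e.\ the triangular lattice, whose critical site percolation threshold is exactly $p_c=1/2$, not strictly larger than $1/2$ as you assert when you write $p_i\le 1/2<p_c$. Consequently there is no decay ``at a rate controlled by $p_c-1/2>0$'': for the Hadwiger model at $T>0$ one has $p_i<1/2=p_c$, and the best uniform-in-$T$ bound is the \emph{critical} path probability $\PP_{1/2}(\lambda\leftrightarrow\partial\Lambda)$, which does tend to $0$ as the buffer grows (this is precisely how the paper chooses its $M_0$) but not exponentially and not via a subcritical gap. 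Worse, for the $z=1$ hard hexagon model your own computation gives $p_i=1/2=p_c$ exactly, so the hypothesis $p_i<p_c$ of the disagreement percolation theorem fails there; its uniqueness and the uniform convergence of $\PP^R_{H,\eta}(A)$ to $\PP_{H,\eta}(A)$ are precisely what you may \emph{not} assume. The paper is explicit on this point: uniqueness of the $z=1$ hard hexagon limit is derived as a \emph{corollary} of this theorem, with the remark that only for $z<1$ does it follow immediately from disagreement percolation. This also undercuts your closing reduction (``take $\eta$ hard-hexagon-allowed''), which invokes that unproved uniqueness to dispose of the boundary-compatibility issue in the conditional identity.

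The paper's proof is structured so that no locality estimate for the hard hexagon model is ever needed: by the Markov property it sandwiches $\PP^{M'}_{H,\eta}(A)$ between $\min_{\eta'}\PP^{M}_{H,\eta'}(A)$ and $\max_{\eta'}\PP^{M}_{H,\eta'}(A)$, and then routes every comparison through the Hadwiger model alone --- the identity $\PP^M_{H,\eta'}(A)=\PP^M_{S,\eta',T}(A\mid A^M_{EC})$ together with the chessboard-estimate concentration controls $|\PP^{M}_{S,\eta',T}(A)-\PP^{M}_{H,\eta'}(A)|$, while disagreement percolation for the Hadwiger model (where $p_i<1/2$ holds strictly at every positive temperature) controls the dependence on boundary conditions. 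Your plan could be repaired by adopting that routing, or alternatively by strengthening your percolation input: the van den Berg--Maes coupling inequality itself does not require subcriticality, and critical site percolation on the triangular lattice does not percolate, so critical path probabilities from the fixed finite set $\lambda$ still vanish; but that is a genuinely additional (and nontrivial) input which you would have to state, and as written your uniformity and uniqueness claims for the hard hexagon model are unsupported.
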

\begin{proof}
Without loss of generality, assume that $\lambda$ is an $N\times N$ rectangle. Corollary 1 of \cite{BM94} states:
$$
|\PP^M_{S,\eta_1,T}(A)- \PP^M_{S,\eta_2,T}(A)|\leq \PP_{p_i}(\text{there is an open path from some vertex in } N \text{ to } \partial M)
$$
where $\PP_{p_i}$ is Bernoulli site percolation with probability $p_i$, and $p_i$ is the variational distance between neighbor boundary defined in Theorem~\ref{disperc}. $p_i<1/2$ at any non-zero temperature, so the probability of an open path is always less than at the critical probability, $1/2$. For fixed $N$, we can choose $M_0$ such that \cite{K82} 
$$
\PP_{1/2}(\text{there is an open path from some vertex in } N \text{ to } \partial M)<\epsilon/2 $$
Denote by $A^M_{EC}$ the event that no vertex in $M$ is in the state $F$ or $H$. Given $M_0$, we can choose $T$ such that $\PP^M_{S,\eta,T_0}(A^M_{EC})>1-\epsilon/2$, by the previous theorem. $\PP^M_{H,\eta}(A)=\PP^M_{S,\eta,T}(A|A^M_{EC})$, for any $T$, by the definition of the hard hexagon model, so 
$$
|\PP^M_{S,\eta,T}(A)-\PP^M_{H,\eta}(A)|=\big|\PP^M_{S,\eta,T}(A|A^M_{EC})-\PP^M_{S,\eta,T}(A|\neg A^M_{EC})\big|\cdot\PP^M_{S,\eta,T}(\neg A^M_{EC})\leq \epsilon/2
$$
Now, consider a larger $M'$ containing $M$. 
$$
\min_{\eta'}\PP^{M}_{S, \eta',T}(A)\leq \PP^{M'}_{S, \eta,T}(A)\leq \max_{\eta'}\PP^{M}_{S, \eta',T}(A)$$
$$
\min_{\eta'}\PP^{M}_{H, \eta'}(A)\leq \PP^{M'}_{H, \eta}(A)\leq \max_{\eta'}\PP^{M}_{H, \eta'}(A)
$$
because $\eta_{M'}$ affects $A$ only by inducing a probability distribution on $\eta_M$, by the Markov property. Thus

\begin{align*}
|\PP^{M'}_{S, \eta,T}(A)-\PP^{M'}_{H, \eta}(A)|&\leq \max_{\eta_1',\eta_2'}|\PP^{M}_{S, \eta_1',T}(A)-\PP^{M}_{H, \eta_2'}|\\
&= \max_{\eta_1',\eta_2'}|\PP^{M}_{S, \eta_1',T}(A)-\PP^{M}_{S, \eta_2',T}(A)+\PP^{M}_{S, \eta_2',T}(A)-\PP^{M}_{H, \eta_2'}|\\
&\leq \max_{\eta_1',\eta_2'}|\PP^{M}_{S, \eta_1',T}(A)-\PP^{M}_{S, \eta_2',T}(A)|+|\PP^{M}_{S, \eta_2',T}(A)-\PP^{M}_{H, \eta_2'}| \\
&\leq \epsilon/2+\epsilon/2=\epsilon
\end{align*}
We can always choose $T$ such that $|\PP_{S,\eta,T}(A)-\PP_{H,\eta}(A)|\leq \epsilon$, because the inequality holds for every sufficiently large $M'$, and so $\lim_{T\to 0} \PP_{S,\eta,T}(A)=\PP_{H,\eta}(A)$
\end{proof}

\begin{corollary}
For all $A$, $\PP_{H,\eta}(A)$ is unique regardless of $\eta$, so the even-distribution hard hexagon model has a unique infinite volume limit. 
\end{corollary}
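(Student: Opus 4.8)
The plan is to leverage the two facts just established: that the subcritical Hadwiger model along the $E-C$ line with $e_F\geq e_H$ has a unique Gibbs state at every temperature, and that its $T\to 0$ limit recovers the even-distribution hard hexagon probabilities. The key observation is that uniqueness of the Gibbs state forces $\PP_{S,\eta,T}(A)$ to be independent of the boundary condition $\eta$; the preceding limit theorem then transfers this independence directly to $\PP_{H,\eta}(A)$.

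First I would argue that $\PP_{S,\eta,T}(A)$ does not depend on $\eta$. By the disagreement percolation theorem we have $p_i<p_c=1/2$ at every nonzero temperature, so the bound from Corollary 1 of \cite{BM94},
$$|\PP^M_{S,\eta_1,T}(A)-\PP^M_{S,\eta_2,T}(A)|\leq \PP_{p_i}(\text{open path from }\lambda\text{ to }\partial M),$$
has a right-hand side that tends to $0$ as $M$ grows, since subcritical site percolation has no infinite cluster and the probability that a fixed finite set $\lambda$ connects to $\partial M$ decays as $\partial M$ recedes. Taking $M\to\infty$ shows the infinite-volume limits agree for every pair $\eta_1,\eta_2$, so $\PP_{S,\eta,T}(A)$ is a single number depending only on $T$ and $A$.

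Next I would invoke the previous theorem, $\lim_{T\to 0}\PP_{S,\eta,T}(A)=\PP_{H,\eta}(A)$. Since the quantity inside the limit is independent of $\eta$, its limit is as well, so $\PP_{H,\eta}(A)$ is the same for every boundary condition $\eta$. Because the probabilities of all local events $A$ under all boundary conditions determine the infinite-volume measure, boundary-independence of $\PP_{H,\eta}(A)$ for every local $A$ is exactly the statement that the even-distribution hard hexagon model has a unique infinite-volume Gibbs measure.

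The argument is short because the heavy lifting was done in the preceding theorems; the one point requiring care is the first step, namely confirming that ``unique Gibbs state'' genuinely yields $\eta$-independence of the finite-temperature probabilities rather than merely uniqueness of some abstract limit. This follows cleanly from the subcriticality $p_i<1/2$ making the disagreement-percolation coupling bound vanish in the infinite-volume limit, and no interchange-of-limits subtlety arises, since each finite-temperature probability is already free of $\eta$ before the $T\to 0$ limit is taken.
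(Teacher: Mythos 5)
Your proposal is correct and follows essentially the same route as the paper: combine the $\eta$-independence of $\PP_{S,\eta,T}(A)$ (already established via disagreement percolation, since $p_i<1/2$) with the limit theorem $\lim_{T\to 0}\PP_{S,\eta,T}(A)=\PP_{H,\eta}(A)$ to transfer boundary-independence to the hard hexagon measure. The only difference is cosmetic --- you re-derive the $\eta$-independence step from the coupling bound of \cite{BM94}, whereas the paper simply cites it as already established.
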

 \begin{proof}
$\lim_{T\to 0} \PP_{S,\eta,T}(A)=\PP_{H,\eta}(A)$, and we already established that $\PP_{S,\eta,T}(A)$ is independent of $\eta$, so $\PP_{H,\eta}(A)$ is independent of $\eta$. Note that for $z<1$, the result follows immediately from disagreement percolation.
 \end{proof}

\section{Special Cases}

To our knowledge, the only finite-energy model in the Hadwiger class other than the Ising model in a field that has been exactly solved is the Baxter-Wu model, sometimes called the Ising model with three-spin interactions \cite{BW73}. This has Hamiltonian $-\sum_v h_{v1}h_{v2}h_{v3}$, where $h_{v1},h_{v2},h_{v3}$ are the spins in $\{-1,1\}$ of the hexagons adjacent to $v$, which results in a model at the midpoint of the $C-F$ or $E-H$ degeneracy lines, depending on sign convention. This makes it a natural analogue of the Ising model in the configuration space, which is the midpoint of the only other Peierls degeneracy line. The Baxter-Wu model exhibits a sharp phase transition, which, given the sharp transitions in the Ising model as well,  is reason to believe that in fact all Hadwiger models exhibit sharp phase transitions separating the different regions of the diagram, except along the non-Peierls degeneracy lines. 

At the pure Euler point, the Hamiltonian has a number of very nice properties, particularly on the hexagonal lattice. Euler Characteristic is the difference of components and holes, so with fixed boundary conditions there is an isomorphism between the pure Euler model and a model defined on loop configurations $\omega$, with energy $O(\omega)-I(\omega)$, with $O(\omega)$ and $I(\omega)$ denoting the number of loops contained within an even and odd number of loops respectively (generally loops will not be deeply nested, so this corresponds to ``outer" and ``inner" loops). Adding the perimeter term results in a local analogue of the loop $O(n)$ model. At low temperatures, where nested loops are very rare, these models display similar behavior, and the loop $O(n)$ results can be used to make an alternate proof of three distinct Gibbs states\cite{DC17}. However, due its nonlocality the loop $O(n)$ model is much less amenable to traditional techniques, and so does not provide the completeness results of Pirogov--Sinai-Zahradnik.

\section{Further Directions}

As any finite-range model displays only a single undominated Gibbs state at sufficiently high temperatures \cite{FV17}, models in the $C$ and $H$ regions must experience a change in the number of Gibbs state at some intermediate temperature. However, precisely where this change occurs, and if there are temperatures with a different number of Gibbs states, is yet unknown. For any set of parameters not on the non-Peierls degeneracy lines, a transition occurs in the sense that some configurations dominate at low temperature, while at high temperatures no configuration dominates. However, the nature of this transition is also not known.

This paper relies almost entirely on very generic features of the Hadwiger models. However, the Hadwiger property is a very strong restriction on the space of 2-dimensional models, and should make this class amenable to much stronger results using these subtler properties. With face assignments of $\{1,-1\}$, the area term is $\sum_i x_i$, the perimeter term is $\sum_{(i,j)\in e}x_ix_j$ and the Euler term is $\sum_{(i,j,k)\in v}x_i+x_j+x_k-3x_ix_jx_k$, so all Hadwiger models are generalized Ising models, as mentioned in the introduction. However, any model with nonzero Euler term is not ferromagnetic, so results about the general class of ferromagnetic Ising models fail to hold. 

The structure of strictly locally geometric models depends significantly on the underlying lattice. As the maximally symmetric 2D infinite graph, the hexagonal graph provides the simplest dynamics, but the model can also be considered on other lattices. On the usual square lattice, Hadwiger models do not span the set of all vertex state assignments, because there are 4 nonenmpty assignments and only 3 Hadwiger basis vectors. Thus in that domain the space of ground states is not nearly as evident, nor is there an obvious natural choice of basis. 

In three dimensions, the space of invariant valuations is four-dimensional, being parametrized by Euler characteristic, mean width, surface area, and volume \cite{G07}. Here, mean width refers to the average projected length of the set over the space of 1-dimensional subspaces. The natural analogue to the lattice of hexagons is the lattice of truncated octahedra; for both, $d+1$ identical cells meet at each vertex. However, unlike the hexagonal lattice, for which the Euler characteristic is antisymmetric, the Euler characteristic is now symmetric, paired with the symmetric surface area term. Volume remains antisymmetric, along with the new mean-width terms. Up to temperature normalization, this generates two 1-dimensional subspaces of purely symmetric or anti-symmetric energy functions, granting the model many more potential symmetries to rely on. 

\section*{Acknowledgments and Declarations}

Acknowledgments: We'd like to thank Alexander Smith at the University of Minnesota as well as {\'E}rika Rold{\'a}n and Ron Peled for essential conversations on the paper. In particular, Smith noticed the connection between the class of invariant valuations and the Ising Hamiltonian, which inspired the whole project. We'd also like to thank the anonymous referees for their useful comments and suggestions.\\*
\\*
Competing Interests: We have no competing interests to declare.\\*
\\*
Data Availability Statement: No datasets were generated or analyzed during the current study.

\bibliography{refs}

\backmatter



\end{document}